	\newtheorem{theorem}{Theorem}
	\theoremstyle{definition}
	\newtheorem{definition}{Definition}[section]
\title{
	A superstatistical formulation of complexity measures
	}
\author{
	Jes\'us Fuentes \& Octavio Obreg\'on
	}
\date{}
\begin{document}
\maketitle
\begin{abstract}

It is discussed how the superstatistical formulation of effective Boltzmann factors can be related to the concept of Kolmogorov complexity, generating an infinite set of complexity measures (CMs) for quantifying information. At this level the information is treated according to its background, which means that the CM depends on the inherent attributes of the information scenario. While the basic Boltzmann factor directly produces the standard complexity measure (SCM), it succeeds in the description of large-scale scenarios where the data components are not interrelated with themselves, thus adopting the behaviour of a gas. What happens in scenarios in which the presence of sources and sinks of information cannot be neglected, needs of a CM other than the one produced by the ordinary Boltzmann factor. We introduce a set of flexible CMs, without free parameters, that converge asymptotically to the Kolmogorov complexity, but also quantify the information in scenarios with a reasonable small density of states. We prove that these CMs are obtained from a generalised relative entropy and we suggest why such measures are the only compatible generalisations of the SCM.
 
\end{abstract}
\section{Introduction}
\label{intro}

Suppose we are given the binary strings
\begin{equation*}
\begin{aligned}
	&\verb|10101010101010101010|\\
	&\verb|00100110011110111001|
\end{aligned}
\end{equation*}
and then we are asked how random they are. Without further analysis, the first string is simply a sequence of ten times \verb|10|, hence to quantify its randomness is rather meaningless. On the other hand, although the second string looks more complicated than the first one, it merely corresponds to the first twenty bits of the decimal part of $\pi$, therefore its description can be put into unchallenging terms as well. The more economic description we can glimpse to resemble these binary objects the closer we are to their actual {\it Kolmogorov complexity} \cite{kolmogorov, solomonoff0, chaitin1},  which is, roughly speaking, a measure of randomness that looks for the shortest possible program that halts and delivers the bit string in question, the shorter the program, the lesser random is its output.

From a statistical viewpoint, however, the two bit strings above have the same probability, $2^{-20}$, of being picked from the whole set of binary sequences of twenty bits. Even though, it is also interesting to ask for the probability with which we can pull a specific program, from a collection of programs, that prints out a bit string like the ones above. To be more clear, suppose that a set of programs are recorded in the memory of a computer but we ignore which state it is in. Suppose the only thing we know is the expected value of the length of the program the computer outperforms. Of course we can improve our chances of guessing if we are given more information. Suppose we additionally know the outcome itself, hence we are led again to the Kolmogorov complexity of the program involved. Yet we have now an additional component. From the Gibbs formulation of ensemble theory, we know that the probability that the computer in question is in state (program) $x$ is quantified by $p(x)$, as long as this distribution maximises a suitable entropy measure subject to some specific constraints that we shall discuss soon.

In this sense we are associating two different concepts that account for information although each one by separate paths and with distinct interpretations: (1) The entropy, a functional which depends on a probability distribution (and sometimes on parameters) and, (2) The Kolmogorov complexity or {\it algorithmic entropy}, a descriptive quantity that depends on the object itself. The relation between these two concepts has been matter of a substantial discussion \cite{chaitin, zvonkin, tadaki, baez, vitanyi, galatolo}, but here we want to adopt a different approach.

In this work we introduce the notion of superstatistics \cite{beck} as an option to extend the algorithmic information theory to an infinite number of CMs. The proposal is rather simple. It consists of relating the concept of generalised entropy to their corresponding CM. Nonetheless, the entropy as a measure of information must satisfy the criterion of stability \cite{lesche} to become eligible as an adequate tool for such purposes. Which reduces the whole universe of generalised entropies to a small set. To our knowledge, there are only two entropies independent from parameters that generalise the Shannon's entropy and  simultaneously fulfil the condition of stability \cite{paper} ---in consequence these are good candidates to deal with information. For this reason, we support our following discussion in such entropy measures, namely $H_+(X)$ and $H_-(X)$, and their corresponding CMs, from now on identified as $K_+(X)$ and $K_-(X)$.

Interestingly, we also arrive to $K_+(X)$ and $K_-(X)$ from the relative versions of $H_+(X)$ and $H_-(X)$, in turn interpreted as general representations of the Kullback-Leiber divergence \cite{kullback} for the Shannon's entropy.

Our discussion is organised as follows. In Sec. \ref{superstatistics} we introduce the superstatistics framework and establish the concept of entropy on this context. We also formulate the connection between the generalised entropy and its respective CM. In Sec. \ref{entropies} we analyse statistically the consequences driven by a generalised Kolmogorov complexity in terms of the entropies $H_+(X)$ and $H_-(X)$. As a result we find the effective complexities $K_+(X)$ and $K_-(X)$. Finally in Sec. \ref{conclusions} we summarise our results to conclude our work.

\section{Algorithmic superstatistics}
\label{superstatistics}

The superstatistics approach \cite{beck} handles non-equilibrium macroscopic systems segmented into cells that manifest asymptotically stationary states with a spatiotemporally fluctuating intensive quantity, typically the inverse temperature $\beta$, which hardly varies on a long time scale. To each cell there is assigned a particular $\beta$ distributed according to a piecewise continuous, normalisable probability density $f(\beta)$. Inside each region $\beta$ is approximately constant and therefore there is essentially local equilibrium. At a global level, the system lies in a state out of equilibrium but enough isolated from external upheavals, thus it only behaves slightly deviated from equilibrium. When the entire fluctuations are averaged, inasmuch as the sum converges, an effective Boltzmann factor is obtained giving rise to general statistics. Even more, given that the method relies on those normalisable distributions $f(\beta)$ we can speak of an infinite set of possible generalised statistics.

To knit the pieces together, we are to think of these cells as individual programs $x$ that cast an outcome and halt, suggesting that the entire system can be thought of as a universal computer $U$ of general purpose ---indeed in a real scenario, the tasks that a computer outperforms are upshots of a big collection of different recipes stored in it. Accordingly, if we denote with $|x|$ the length of the program $x$, hence
\begin{equation}
\label{BE}
B(|x|)=\int\limits_0^\infty d\beta\,f(\beta)e^{-\beta |x|}, \quad \beta>0,
\end{equation}
is the effective Boltzmann factor related to the universal computer $U$. In the case that all the cells have the same $\beta$, then the system can be considered as a single cell, in that case $f(\beta)=\delta(\beta-\beta_0)$, and therefore we have
\begin{equation*}
B(|x|) = e^{-\beta |x|},
\end{equation*}
which is the Boltzmann factor in the orthodox picture of statistical mechanics. 

Yet the generalised statistics have to be normalisable over the whole domain of program lengths, for that reason the integral (partition function $Z$)
\begin{equation*}
\int\limits_0^\infty d|x|\,B(|x|)
\end{equation*}
must converge. Nevertheless, the partition function is not always computable, for instance, in the particular case $B(|x|) = e^{-\gamma |x|}$ ---which is the ordinary Boltzmann factor--- the integral exists for $\beta \geq \ln 2$ although it is uncomputable and partially random for $\ln2$ as proved by Tadaki \cite{tadaki2}. Exploring the (un) computability of $Z$ in the general case, constitutes a mathematical challenge beyond the scope of this work, and we shall omit such discussion.

There might be circumstances in which we are given an effective Boltzmann factor but instead ignore the generating distribution, in that case we can compute it by reversing \eqref{BE} as
\begin{equation}
\label{distribution}
f(\beta) = \mathrm{Re}\left[\frac{1}{2\pi i}\int\limits_{|x|-i\infty}^{|x|+i\infty}d|x'|\, B(|x'|)e^{\beta |x'|}\right],
\end{equation}
we shall remark, however, that under this integral transformation, $f(\beta)$ is not univocally determined by the effective Boltzmann factor $B(\|x\|)$. 

We are now to connect these concepts with the algorithmic entropy. As we have previously discussed, the core of the superstatistics approach is the effective Boltzmann factor $B(|x|)$ whose construction rests on a probability distribution well-nigh customised for a system of particular characteristics. In practice, this serves as a pivot to sprout general statistics in which the related fundamental quantities (such as entropy, free energy, etc.) must be rewritten in the frame of the new scheme.

Of special interest is the general expression for entropy in superstatistics. Before giving its formal definition, it is important to state that we are to consider only entropy measures of the form
\begin{equation}
\label{entropy}
H(X) = \sum_{x \in X} h(p(x)),
\end{equation}
where $h(p(x))$ is known as the entropic form, abbreviated as $h(x)$, and $p(x)$ is the probability distribution according which the program $x$ is distributed over a set $X$.

As it has been nicely shown in \cite{souza}, one can compute the entropic form as
\begin{equation}
\label{entropic}
h(x) = \int\limits_0^x dy\, \frac{\alpha+|y|}{1-|y|/|y^*|}, 
\end{equation}
where the length $|\cdot|$ is the inverse function of \eqref{BE} with minimum value $|\cdot^*|$, if any, and $\alpha$ is a constant that has to be identified from the condition $h(1)=0$. It is worth mentioning that the latter formula comes from a MaxEnt programme, in that regard one should expect that if $|x|$ is such that satisfies formulae \eqref{BE} and \eqref{distribution} then it maximises \eqref{entropic}.

To proceed, we need to establish the following assumptions: The integral \eqref{entropic} exists and can be computed; and the entropy \eqref{entropy} can be written according to the ansatz
\begin{equation*}
H(X) = -\sum_{x\in X} p(x) \Lambda(p(x)),
\end{equation*}
where $\Lambda(x)$ is an effective logarithm whose corresponding inverse function is an effective exponential $\epsilon(x)$ such that $\Lambda(\epsilon(x))=\epsilon(\Lambda(x))=x$, subject to the conditions $\Lambda(1)=0$ and $\Lambda'(1)=1$, see Appendix \ref{app}. Along this work, all logarithms either effective or not are base 2, without further notation unless a different thing needs to be specified.

\subsection{A superstatistical measure of complexity}
\label{superentropy}

In turn, we shall examine some of the aspects conveyed by the superstatistical formulation of entropy and its consequences in the measurement of complexity. To this aim, henceforth we are to consider only recursive probability distributions, that is, distributions computable with a Turing machine $U$, that we assume is running over a prefix-free domain $X$. Recall that a set of strings $X$ is prefix-free if no string $x\in X$ is prefix of another string $x'\in X$.

There is a number of approaches \cite{chaitin2001, lempel} to measure the randomness of an object $x$. But in particular there is an uncomputable measure, although conceptually riveting, known as {\it Kolmogorov complexity} \cite{kolmogorov, solomonoff0, chaitin1}. It is formally defined as follows:

\begin{definition}{Kolmogorov complexity.}\label{kolmogorov} Let $U$ be a prefix-free Turing Machine, the complexity of the string $y$ with respect to $U$ is determined as
\begin{equation*}
K_U(y) = \min_x\{|x|: U(x)=y\},
\end{equation*}
that is the minimum possible length over all programs $x$ with the halt property, which outcome is $y$.
\end{definition}

As it was argued in Sec. \ref{intro}, the quantity $K_U(y)$ has an intuitive but profound meaning. For a person describing the recipe for beef goulash to another person such that this one cannot make a different interpretation of the directions  for the correct realisation of such meal, then the number of bits in that communication constitutes an upper bound on $K_U(y)$.

Nonetheless, we want now to pursue a partially different approach regarding Def. \ref{kolmogorov}. Instead of considering the complexity associated with a program $x$, we are interested in the probability $p(x)$ associated with that program, i.e. we look for the way this (minimum-length) program is distributed over the domain $X$ of programs that can achieve a specific outcome $y$. Thus rather than $K_U(y)$, from now on we refer to this quantity as $K(X) \equiv K_U(p(x))$.

Both $K_U(y)$ and $K(X)$ are measures of information. The first one strictly arises from combinatorial arguments, while the latter is a purely statistical measure of complexity that can be thought of as the average rate at which information is extracted from a combinatorial trial. Since the second measure of complexity depends on how the program $x$ is distributed according to the law $p(x)$, there can be a number of ways in which such probability distribution can be maximised, depending on the entropy and maybe on some inherent constraints that are typically related with the special characteristics of a system.

Even more, parallel to the discussion in \cite{campbell0}, the complexity $K(X)$ can be weighted according to a function $\varphi(K(X))$ that quantifies the cost of managing specific rates of complexity. The Nagumo-Kolmogorov function $\varphi$ is not arbitrary but depends on the entropy functional that maximises the distribution $p(x)$, such that
\begin{equation*}
0\leq \varphi^{-1}\left(\sum_{x\in X} p(x) \varphi(K(x))\right),
\end{equation*}
then, for any superstatistical entropy of the form \eqref{entropic}, one obtains an effective coding theorem (formally discussed in Sec. \ref{codingcomplexity})
\begin{equation}
\label{codtheo}
\sum_{x\in X} h(x) \leq \varphi^{-1}\left(\sum_{x\in X} p(x) \varphi(K(x))\right),
\end{equation}
where, as shown in \cite{levin, kirchherr}, the complexity $K(x)$ is indeed related to the Chaitin formulation of complexity \cite{chaitin2001}, according to the formula $K(x) = -\Lambda(m(x)) + O(1)$, with
\begin{equation}
\label{chaitin}
m(y) = \left\{\sum_{x\in X} 2^{-|x|}:U(x)=y\right\},
\end{equation}
that is, $m$ is the probability that $y$ is the output of a universal Turing machine $U$ running over the programs $X$. The summation over the whole set of programs in \eqref{chaitin} can be simplified by reasoning that there is only one program $x$ in $X$ which outcome is $y$, to see this imagine $n$ programs $x_1,\ldots,x_n$, all having the same outcome $y$, yet we are only interested in the shortest one given that $K$ regards the minimum possible description of $y$. Moreover, in the hypothetical case that $X$ contains $n$ minimal programs $x=x_1=\cdots=x_n$, all printing the same output $y$, then $|x|>\ln n$. Under these considerations then \eqref{codtheo} is equivalently expressed as
\begin{equation*}
\sum_{x\in X} h(x) \leq \varphi^{-1}\left(\sum_{x\in X} p(x) \varphi(|x| + O(1))\right),
\end{equation*}
consequently, we are entitled to write the following relation
\begin{equation}
\label{theorem}
0 \leq  \varphi^{-1}\left(\sum_{x\in X} p(x) \varphi(K(x))\right) - \sum_{x\in X} h(x) \leq \varphi(K(X)),
\end{equation}
this expression constitutes a theorem, and tells us that the entropy and the complexity, as measures of information, are truly connected if the complexity is treated from a statistical viewpoint. Not only that, the latter relation comprises the coding theorem formulated by Shannon \cite{shannon}, indicating us that the entropy provides the minimum rate at which the complexity can be expressed. Some of the consequences conveyed by the relation \eqref{theorem}, in terms of generalised entropies, shall be surveyed in the following section.

\section{Effective algorithmic entropies}
\label{entropies}

As it has been discussed earlier, the connection between superstatistics and the algorithmic formulation of information theory has the advantage of quantifying data in generalised scenarios with non-stationary information fluxes.  

That is the case of a theoretical computer partitioned into modules (or cells) that can be randomly accessed depending on the tasks that shall be executed. Each of theses modules will have an individual expected algorithmic-running time, such that the global execution time is distributed over the whole collection of modules that participate in a specific task. However, the way in which the global algorithmic-running time is distributed over the modules may vary according to the computer's configuration. For instance, if the collection of modules are completely independent from each other, a probability distribution maximised through the Shannon's entropy may describe such scenario. Although this is merely a special situation, in that many applications lead to non-standard distributions, and the necessity of a generalised entropy becomes evident.

This happens, in computing scenarios in which a module-module interaction becomes negligible in the presence of a high density of programs (states, in physics). While at this level the Shannon's entropy suffices to maximise the underlying distribution, the story is not the same in case the density of programs is reasonable small, given that the interactions cannot be entirely disregarded. In this case, the Shannon's entropy needs of nonlinear, correction terms to account for such interactions.

From the superstatistics viewpoint, it has been studied in \cite{oo10,oo18} that this kind of systems could be well characterised via Gamma-like distributions conveying shape parameters that {\it a posteriori} can be identified with a generic probability distribution $p(x)$ as
\begin{equation*}
f_{p(x)}^\pm(\beta) = \frac{1}{\beta_0p(x)\Gamma\left(\frac{1}{p(x)}\right)} \left(\frac{\beta}{\beta_0}\frac{1}{p(x)}\right)^{\frac{\pm1 - p(x)}{p(x)}} \exp\left(-\frac{\beta}{\beta_0p(x)}\right),
\end{equation*}
now the integral in \eqref{BE} can be outperformed with these distributions to  obtain a pair of effective Boltzmann factors
\begin{equation}
B_{p(x)}^\pm(|x|) = (1 \pm p(x)\beta_0|x|)^{\mp\frac{1}{p(x)}}.
\end{equation}

Following the steps described in Sec. \ref{superstatistics}, it can be shown that substituting the inverse of $B_{p_i}^\pm(|x|)$ into formulae \eqref{entropic} and \eqref{entropy}, in that order, one obtains the two entropy measures
\begin{equation}
\label{masmenos}
H_+(X)= -\sum_{x\in X} p(x)\ln_{+}(p(x)), \quad H_-(X) = -\sum_{x\in X} p(x)\ln_{-}(p(x)),
\end{equation}
where the effective logarithms define as $\ln_+(\xi)\equiv-(1-\xi^\xi)/\xi$ and $\ln_-(\xi)\equiv-(\xi^{-\xi}-1)/\xi$ for $\xi\in(0,1]$. 

These two entropy measures converge to the Shannon entropy in the regime of low probabilities or high density of programs \cite{paper}, which results evident from the series representations 
\begin{equation}
\label{series}
H_+(X) = -\sum_{x\in X} \sum_{k\in\mathbb{N}}\frac{[p(x) \ln p(x)]^k}{k!}, \quad H_-(X) = -\sum_{x\in X} \sum_{k\in\mathbb{N}}(-1)^{k+1}\frac{[p(x) \ln p(x)]^k}{k!}, 
\end{equation}
remarking that in general there is a region identified with physical phenomena slightly out of equilibrium \cite{fp,gil17,huicho}. We are to focus our analysis on the latter region, since the limiting case of Shannon is already know.  

Unlike other non-extensive entropies the functionals \eqref{masmenos} do not depend on free parameters but only on the probability distribution. Actually, the lack of parameters and their asymptotical behaviour grant both entropies of full stability, see \cite{paper}, which is an essential attribute for managing information either statistical or algorithmically, as we are to show now.

\subsection{From coding theorems to complexity measures}
\label{codingcomplexity}

The superstatistical formulation of entropy can account for systems out of equilibrium. We have also discussed how the entropy and the Kolmogorov complexity as measures of information relate each other through \eqref{theorem}. In turn, we are to translate those arguments to the entropies \eqref{masmenos}, thus we shall formally state a noiseless coding theorem in view of  $H_\pm(X)$.

\begin{theorem}{Generalised noiseless coding theorem.}\label{t1} Let the Nagumo-Kolmogorov function $\varphi(x)=x$, then the expected lengths $L_{\pm}=\sum_{x\in X} p(x) \vert x \vert_\pm$ satisfy
\begin{equation*}
L_{\pm}\geq H_{\pm}(X),
\end{equation*} 
with equality iff $|x|^*_\pm=-\ln_{\pm}p(x)$ for every $x$ in $X$.
\end{theorem}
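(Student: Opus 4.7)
The plan is to adapt the classical Shannon--Gibbs argument to the generalised setting, using the effective exponential $\epsilon_\pm$ and logarithm $\ln_\pm$ pair in place of the ordinary ones. First, the inequality $L_\pm \geq H_\pm(X)$ should follow as a specialisation of the generalised coding inequality \eqref{theorem}: setting the Nagumo--Kolmogorov function $\varphi(x)=x$ and identifying the entropic forms $h_\pm(p(x)) = -p(x)\ln_\pm p(x)$ of \eqref{masmenos} with $\sum_x h(x)$, while identifying $K(x)$ with $|x|_\pm$, turns the left inequality of \eqref{theorem} into $L_\pm - H_\pm(X) \geq 0$. Thus one half of the theorem is essentially inherited from the bound already proved in Sec.~\ref{superentropy}.

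For the equality clause, I would set up a generalised Gibbs-type inequality. Let $Z_\pm := \sum_{x\in X}\epsilon_\pm(-|x|_\pm)$ and introduce an auxiliary distribution $q(x) = \epsilon_\pm(-|x|_\pm)/Z_\pm$. The prefix-free property of $X$ combined with the normalisability condition on the effective Boltzmann factor discussed after \eqref{BE} should deliver a generalised Kraft-type bound $Z_\pm \leq 1$. From the series representations \eqref{series}, the function $-\ln_\pm$ is convex on $(0,1]$, so a Jensen argument yields $\sum_x p(x)\bigl[\ln_\pm p(x) - \ln_\pm q(x)\bigr] \geq 0$ with equality iff $p\equiv q$; this is the non-negativity of the relative version of $H_\pm$ alluded to in the introduction. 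Unwinding the definition of $q$ yields $H_\pm(X) \leq L_\pm + \psi(Z_\pm)$ for a monotone correction $\psi$ with $\psi(Z_\pm)\leq 0$ whenever $Z_\pm \leq 1$, and saturation forces both $Z_\pm = 1$ and $p(x) = \epsilon_\pm(-|x|_\pm)$, i.e.\ $|x|_\pm^{*} = -\ln_\pm p(x)$ for every $x\in X$.

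The main obstacle, I expect, is the non-additive character of $\ln_\pm$: the familiar identity $\ln_\pm(q(x)/Z_\pm) = \ln_\pm q(x) - \ln_\pm Z_\pm$ simply fails, so the telescoping that makes the orthodox Shannon proof so clean is unavailable. The delicate step is therefore to transport the normalisation constant $Z_\pm$ through the bound by a monotonicity argument on $\ln_\pm$ rather than by algebraic splitting, and to verify that the Kraft-type bound and the Gibbs-type inequality saturate \emph{simultaneously} precisely at $|x|_\pm^{*} = -\ln_\pm p(x)$. Once convexity of $-\ln_\pm$ on $(0,1]$ is extracted from \eqref{series} and the Kraft-type bound from the convergence of the partition function, the remainder should reduce to routine manipulation.
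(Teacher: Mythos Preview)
Your proposal takes a considerably more elaborate route than the paper. The paper's argument is essentially a one-liner: it writes
\[
L_\pm - H_\pm(X) \;=\; \sum_{x\in X} p(x)\bigl(|x|_\pm + \ln_\pm p(x)\bigr)
\]
and asserts that each summand is nonnegative because $|x|_\pm \geq -\ln_\pm p(x)$ (justified only by the remark that the $|x|_\pm$ are integers), with equality exactly when every $|x|_\pm$ attains its optimal value $|x|_\pm^{*}=-\ln_\pm p(x)$. There is no Kraft-type bound, no auxiliary distribution $q$, no Gibbs inequality, and no convexity extracted from \eqref{series}. Your machinery would, if it went through, yield a more self-contained and arguably more honest proof than the paper's, which effectively takes $|x|_\pm \geq -\ln_\pm p(x)$ as a standing hypothesis rather than deriving it from a Kraft constraint.

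That said, your plan has a genuine gap. Your first step is circular: you propose to read off $L_\pm \geq H_\pm(X)$ from the left inequality of \eqref{theorem}, but the paper explicitly flags \eqref{codtheo} and \eqref{theorem} as an ``effective coding theorem (formally discussed in Sec.~\ref{codingcomplexity})''; those displays are forward references to precisely what Theorem~\ref{t1} is meant to establish, not independent results available for citation. For the equality clause, the obstacle you yourself identify is real and unresolved: because $\ln_\pm$ is non-additive, the step $\ln_\pm\!\bigl(\epsilon_\pm(-|x|_\pm)/Z_\pm\bigr) = -|x|_\pm - \ln_\pm Z_\pm$ fails, so the passage from the Gibbs-type inequality on $q$ to a clean bound of the form $H_\pm(X)\leq L_\pm + \psi(Z_\pm)$ does not follow, and the vague ``monotonicity argument'' does not repair it. If you want to match the paper, drop the Kraft/Gibbs apparatus entirely and argue term-by-term from $|x|_\pm + \ln_\pm p(x)\geq 0$.
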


\begin{proof} Note that the difference
\begin{equation}
\begin{split}
L_{\pm}-H_{\pm}(X) &= \sum_{x\in X} p(x) |x|_\pm + \sum_{x\in X} p(x) \ln_{\pm}p(x)\\
& = \sum_{x\in X} p(x)\left(|x|_\pm + \ln_{\pm}p(x)\right)\\
& \geq 0,
\end{split}
\end{equation}
directly implies that $|x|_\pm \geq -\ln_{\pm} p(x)$, for the reason that every $|x|_\pm$ is an integer. Hence, the equality is attained iff the individual lengths $|x|_\pm=|x|_\pm^*$ are optimal.
\end{proof}

What the Theorem \ref{t1} tells us is that the minimum rate of data compression that can be accomplished by a codification process is bounded from below by the entropy scale that characterises the statistics of the system involved. This result has been deeply studied in \cite{noiseless}, although it conforms a cornerstone for our current purposes, namely for the statement of the following theorem.

\begin{theorem}\label{t2} Let $p(x)$ be a recursive probability distribution. For a linear cost function $\varphi(x)=x$, the entropy measures $H_\pm(X)$ induce the existence of effective complexities $K_+(X)$ and $K_-(X)$ such that
\begin{equation*}
0\leq \sum_{x\in X}p(x)K_\pm(x) - H_\pm(X) \leq K_\pm(X),
\end{equation*}
where $K_+(X)$ and $K_-(X)$ are interpreted, respectively, as average lower and upper bounds on the statistical complexity K(X).
\end{theorem}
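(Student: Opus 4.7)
The plan is to specialise the already-established general relation \eqref{theorem} to the two effective logarithms $\ln_\pm$ and to the linear cost $\varphi(x)=x$, and then exploit Theorem \ref{t1} for the lower bound. First I would fix the identification $K_\pm(x)=-\ln_\pm m(x)+O(1)$ that comes from the Chaitin--Levin formula $K(x)=-\Lambda(m(x))+O(1)$ when the effective logarithm is chosen as $\Lambda=\ln_\pm$; this simply names the ``algorithmic length'' that is naturally paired with each of the entropies $H_\pm$ derived from $B_{p(x)}^\pm(|x|)$.

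For the lower bound, the idea is to treat $K_\pm(x)$ as the integer length $|x|_\pm$ of an admissible prefix code: since $m$ is a universal semimeasure obeying a Kraft-type inequality, $K_\pm(x)$ qualifies as an effective code length in the $\ln_\pm$-coding framework. Plugging $|x|_\pm=K_\pm(x)$ into Theorem \ref{t1} yields
\begin{equation*}
\sum_{x\in X}p(x)K_\pm(x)\;\geq\;H_\pm(X),
\end{equation*}
which is the left-hand inequality. For the upper bound I would invoke the master theorem \eqref{theorem} with $\varphi(x)=x$ and $h(x)=-p(x)\ln_\pm p(x)$, whose right-hand side collapses to $\varphi(K_\pm(X))=K_\pm(X)$; combined with the lower bound this gives the full two-sided estimate claimed.

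The final claim that $K_+$ and $K_-$ are, respectively, average lower and upper bounds on the Kolmogorov complexity $K(X)$ I would obtain from the pointwise ordering $\ln_-(\xi)\leq \ln(\xi)\leq \ln_+(\xi)$ valid on $(0,1]$. This can be read off either from the series \eqref{series}, whose signs differ only in the alternating correction terms, or directly from the elementary inequalities $\xi^{\xi}\leq 1\leq \xi^{-\xi}$ for $\xi\in(0,1]$. Applying $-\ln_\pm$ to $m(x)$ and using $K(x)=-\ln m(x)+O(1)$ yields $K_+(x)\leq K(x)\leq K_-(x)$, and averaging against $p(x)$ delivers the bracketing $\sum_x p(x)K_+(x)\leq \sum_x p(x)K(x)\leq \sum_x p(x)K_-(x)$.

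The main obstacle I anticipate is the lower-bound step: one must argue carefully that $K_\pm(x)$, although generally non-integer, still obeys an effective Kraft inequality suitable for invoking Theorem \ref{t1}. This should follow by rounding $-\ln_\pm m(x)$ up and absorbing the gap into the $O(1)$ additive term inherent to the Chaitin--Levin identification, but it is the step where the cleanest asymptotic accounting is needed; the rest is a bookkeeping specialisation of \eqref{theorem} and a monotonicity comparison of $\ln_\pm$ with $\ln$.
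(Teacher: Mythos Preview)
Your lower-bound step via Theorem~\ref{t1} matches the paper's argument exactly, and your addendum deriving the pointwise bracketing $K_+(x)\leq K(x)\leq K_-(x)$ from $\ln_-\leq\ln\leq\ln_+$ is a correct and welcome elaboration of the interpretive clause that the paper merely asserts.

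The genuine gap is in the upper bound. You propose to ``invoke the master theorem \eqref{theorem}'' with $\varphi(x)=x$, but in the paper's logical structure relation \eqref{theorem} is only \emph{announced} in Section~\ref{superentropy}; the text explicitly says the coding theorem is ``formally discussed in Sec.~\ref{codingcomplexity}'', and Theorem~\ref{t2} \emph{is} that formal discussion for the $H_\pm$ case. So appealing to \eqref{theorem} here is circular: you would be citing the very inequality you are asked to establish. The paper instead proves the upper bound by a direct computation: it posits $K_\pm(x)=c'\,|x|_\pm^\ast$ with $c'\geq 1$ (equivalently $K_\pm(x)=|x|_\pm^\ast+O(1)$), substitutes $|x|_\pm^\ast=-\ln_\pm p(x)$, and reduces the claim to a sign check on $c'\sum_{x}(1-p(x)^{-1})\,p(x)\ln_\pm p(x)\leq \sum_x p(x)\ln_\pm p(x)$, arguing from $1-p(x)^{-1}<0$. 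You should replace your appeal to \eqref{theorem} with this kind of explicit manipulation; incidentally, the obstacle you flagged (non-integrality in the lower bound) is not where the difficulty lies, since the paper absorbs that issue into $K_\pm(x)\sim|x|$ just as you suggest.
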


\begin{proof} The inequality at the left implies that $H_\pm(X)\leq\sum_{x\in X}p(x)K(x)$, which is assured by Theorem \ref{t1} given that $K_\pm(x) \sim |x|$, attaining the equality as long as $K_\pm(x) = |x|_\pm^*$. On the other hand, to prove the second inequality suppose that $K_\pm(x)=c' |x|_\pm^*$, $c' \geq 1$, namely $K_\pm(x)+O(1)=|x|_\pm^*$, then 
\begin{equation*}
c' \sum_{x\in X} p(x) |x|_\pm^* + \sum_{x\in X} p(x)\ln_\pm p(x) \leq c' |X|_\pm^*,
\end{equation*}
regrouping terms on both sides, we get
\begin{equation*}
c' \sum_{x\in X} \left(1-p(x)^{-1}\right)p(x)\ln_\pm p(x) \leq \sum_{x\in X} p(x)\ln_\pm p(x),
\end{equation*}
since $p(x)<1$, it follows that $1-p(x)^{-1}<0$, hence the inequality is true, while the equality holds for $p(x)=1$. Therefore we have the theorem.
\end{proof}

As a remark, here the effective $K_\pm(X)$ are not combinatorial measures of information but statistical CMs and they shall not be directly interpreted as descriptive bounds on the Kolmogorov complexity as stated in Def. \ref{kolmogorov}, rather what the measures $K_\pm(X)$ quantify is an average rate of complexity in agreement with the information measured by the entropies $H_\pm(X)$.

As an example, consider the probability distribution:
\begin{equation*}
p(x) = \left\{ \begin{array}{ll}
					0.y & \text{if }~ x = x_1  \\
					1-0.y & \text{if }~ x = x_2 \\
					0 & \text{otherwise},
				\end{array}
			\right.
\end{equation*}
where $y$ is the binary representation of a number $0.y$ between 0 and 1. 

For the entropy measure $H_+(X)$ we have:
\begin{equation*}
0 \leq (c'-1)(-0.y \ln_+0.y - (1-0.y)\ln_+(1-0.y)) \leq c'(-\ln_+0.y - \ln_+(1-0.y)),
\end{equation*}
but $-x \ln_\pm x \leq - \ln_\pm x$, with equality if $x=1$, then from the expression above we get
\begin{equation*}
(c'-1)(-\ln_+0.y - \ln_+(1-0.y)) \leq c'(-\ln_+0.y - \ln_+(1-0.y)),\end{equation*}
analogously for $H_-(X)$. 

\subsection{Kullback-Leibler divergence as a complexity measure}

Sometimes it might be of interest how a given probability distribution is different from another one, typically a prior with respect to a trial distribution. In the context of information this leads to define the entropy of the distribution $p$ relative to another distribution $q$, such that
\begin{equation}
\label{kullback}
H(p\parallel q)=- \sum_{x \in X} p(x)\Lambda(p(x)) + \sum_{x \in X} p(x)\Lambda(q(x)),
\end{equation}
which is a generalisation of the Kullback-Leibler divergence \cite{kullback}. 

The expression in \eqref{kullback} can be interpreted as a measure of information gain \cite{solomonoff1,solomonoff2}. This is fairly intuitive since $q$ is known as the {\it prior} in the Bayesian probability theory and conveys all the initial speculations about something before performing any observation. Yet the prior may haul redundancies, for example if $q(x)=1/{\text{dim}(X)}$ for all $x$, then we are led again to the entropy  up to a constant.

Nonetheless, as shown in \cite{baez}, the prior distribution $q$ can induce interesting results. Imagine that the prefix Turing machine $U$ runs the program $x$, delivers the outcome $y$ and halts, which is expressed in symbols as $U(x)=y$. In this respect, we arrive to a generalisation of Eq. \eqref{chaitin}: 
\begin{equation}
\label{prior}
q(y) = \left\{ \sum_{x\in X} \epsilon^{-\beta |x|}: U(x)=y \right\}, 
\end{equation}
acting as a mirror of $p\sim \epsilon^{-\beta|x|}$ over the set $\mathbb{N}$. 

Indeed, using this prior, and following the same arguments given in \cite{baez}, we are to show that one could think of \eqref{kullback} as a generalisation of a superstatistical algorithmic entropy.

Suppose now that we are particularly interested in those programs whose output is the string $s$, hence the auxiliar distribution that allows us to select that sort of programs is
\begin{equation}
p_y(s) = \left\{ \begin{array}{ll}
					1 & \text{if } y = s  \\
					0 & \text{otherwise},
				\end{array}
			\right.
\end{equation}
then we compute the entropy of $p_y(s)$ relative to \eqref{prior} to obtain
\begin{equation}
\label{infgain}
\begin{split}
H(p_y\parallel q) &=-\sum_{l\in\mathbb{N}} p_y(l)\Lambda(p_y(l)) + \sum_{l\in\mathbb{N}} p_y(l)\Lambda \left(\sum_{x\in X} \epsilon^{-\beta |x| }: U(x)=y \right) \\
&=  K_U(y) + \Lambda\left( Z \right),
\end{split}
\end{equation}
where $Z = \sum_{x\in X}\epsilon^{-\beta |x|}$ is the partition function, and the algorithmic entropy reads
\begin{equation}
\label{algorithmic}
K_U(y)=-\Lambda\left(\sum_{x\in X} \epsilon^{-\beta |x|}: U(x)=y \right),
\end{equation}
implying  that the relative entropy \eqref{infgain} is a generalisation of the algorithmic entropy in Def. \ref{kolmogorov}, cf. \cite{chaitin,zvonkin,chaitin2}.

For instance, when Eq. \eqref{infgain} is put into terms of $\Lambda(x) = \ln(x)$ and $\epsilon^x=e^x$, one can obtain the algorithmic entropy (parallel to Shannon's entropy) reported by Baez \cite{baez}, namely
\begin{equation*}
K_U(y) = -\ln\left(\sum_{x\in X} e^{-\beta |x|}: U(x)=y \right),
\end{equation*}
please note that according with Def. \ref{kolmogorov}, the complexity of the shortest program $x\in X$ that prints $y$ and halts, is specially derived from $H(y)$ for $\beta=\ln2$, yielding $K(y) = |x| + O(1)$.

The expressions \eqref{infgain} and \eqref{algorithmic} can now be inserted into the  structure of the generalised entropies $H_+(X)$ and $H_-(X)$. To simplify the analysis we are to use their series representations \eqref{series}. In that regard the entropy $H_+(X)$ of a distribution $p_y(s)$ relative to a prior $q(y)$, as defined in \eqref{prior}, becomes
\begin{equation}
\label{alg1}
\begin{split}
H_+(p_y\parallel q) &= -\sum_{l\in\mathbb{N}}\sum_{k\in\mathbb{N}} \frac{1}{k!}\left[p_y(l)\ln p_y(l)\right]^k \\
&\hspace{2.5cm} + \sum_{l\in\mathbb{N}}\sum_{k\in\mathbb{N}}\frac{1}{k!}\left[p_y(l)\ln\left(\sum\limits_{x\in X} e^{-\beta |x|}: U(x)=y\right)\right]^k\\
&= -\sum_{k\in\mathbb{N}}\frac{1}{k!}\ln^k\left(\sum_{x\in X} e^{-\beta\vert x\vert}:U(x)=y\right) \\
&\hspace{3.5cm} + \sum_{k\in\mathbb{N}}\frac{1}{k!}\ln^k\left(\sum\limits_{x\in X} e^{-\beta|x|}\right) \\
&=K_+(y) +  \sum_{k\in\mathbb{N}} \frac{1}{k!}\ln^k(Z),
\end{split}
\end{equation}
where $K_+(y)$ is the effective algorithmic entropy that generalises the one given in \cite{baez} and consequently in \cite{chaitin,zvonkin,chaitin2}.

Ditto, now we compute the entropy $H_-(X)$ of a distribution $p_y(s)$ relative to a prior $q(y)$, which yields
\begin{equation}
\label{alg2}
\begin{split}
H_-(p_y\parallel q) &= -\sum_{l\in\mathbb{N}}\sum_{k\in\mathbb{N}}\frac{(-1)^{k+1}}{k!} \left[p_y(l)\ln p_y(l)\right]^k \\
&\hspace{2.5cm} + \sum_{l\in\mathbb{N}}\sum_{k\in\mathbb{N}}\frac{(-1)^{k+1}}{k!}\left[p_y(l)\ln\left(\sum\limits_{x\in X} e^{-\beta |x|}: U(x)=y\right)\right]^k\\
&= -\sum_{k\in\mathbb{N}}\frac{(-1)^{k+1}}{k!}\ln^k\left(\sum_{x\in X} e^{-\beta\vert x\vert}:U(x)=y\right) \\
&\hspace{3.5cm} + \sum_{k\in\mathbb{N}}\frac{(-1)^{k+1}}{k!}\ln^k\left(\sum\limits_{x\in X} e^{-\beta|x|}\right) \\
&=K_-(y) +  \sum_{k\in\mathbb{N}} \frac{(-1)^{k+1}}{k!}\ln^k(Z),
\end{split}
\end{equation}
where $K_-(y)$ is the effective algorithmic entropy related to the entropy $H_-(X)$.

There is a curious aspect that we would like to highlight regarding the effective algorithmic entropies $K_+$ and $K_-$, derived from \eqref{alg1} and \eqref{alg2} respectively. Indeed they do not only generalise the algorithmic entropy in \cite{baez}, but at the same time are special cases of the relative entropies $H_+(p_y\parallel q)$ and $H_-(p_y\parallel q)$ and satisfy Theorem \ref{t2}. Suggesting that the entropies \ref{masmenos} are fundamental measures of information.

Finally, we have discussed that the algorithmic entropies $K_+$ and $K_-$ differ from the standard case $K$ in a regime of low-density programs whereas the three measures coincide asymptotically for bigger chunks of data. In principle, these functionals are uncomputable, still let us make use of a numerical trick to give some hint of their behaviour. To illustrate this, we have generated the Fig. \ref{algEntropies}, where one can observe that there is a region in which the algorithmic entropy $K_-$ would account for a more economical description than the two other measures, yet as $|x|$ (usually measured in bytes) grows the three measures tend to coincide. We cannot assess the definite impact of their actual differences on the description of complex objects, but can the complex structures whose number of components are reasonably small, have a different description than the one estimated by the standard theory?

\begin{figure}[h!]
\begin{center}
\includegraphics[width=\textwidth]{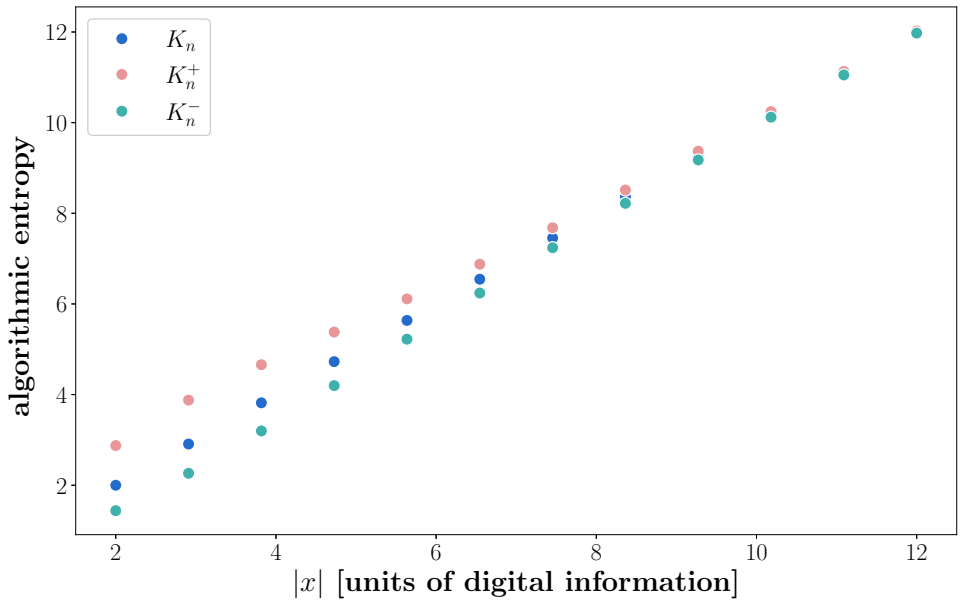} 
\caption{Numerical comparison of effective algorithmic entropies, $K_n$, $K_n^+$ and $K_n^-$ in different data size regimes $|x|$. Note how the three measures converge as $|x|$ grows, while they differ from each other at extremely small chunks of information, below 8 bits. As of today, still there are applications depending on codes of such lengths. That is the case of the American Standard Code for Information Interchange (ASCII), which is an 8-bit code.}
\label{algEntropies}
\end{center}
\end{figure}

\section{Conclusions}
\label{conclusions}

We have explored a route that may establish a link between the algorithmic information theory and the superstatistics framework. The consequences of such unification could derive in a different understanding on the processes of information. Some of these aspects are manifested in our results that we shall summarise as follows.

Formally speaking, the connection between the superstatistical framework and the Kolmogorov complexity comes immediately by following the relation \eqref{theorem}, which is nothing but a generalisation of the statement formulated in \cite{chaitin,zvonkin,chaitin2}, which associates the Shannon's entropy with the SCM. Yet, from our generalised statistical viewpoint, any measure of complexity can be interpreted as an average rate of data compression that usually will differ from the one appraised by the standard theory.

It does not mean, even though, that all CMs are truly realisable. Note that the effective Boltzmann factor \eqref{BE} is computed from a probability distribution that may convey free parameters. Although this parametric structure grants the entropies \eqref{entropic} of enough flexibility to be potentially adapted to any circumstance. But there is a tradeoff with the functional's stability, which is a condition that shall indistinctly be satisfied in order to qualify as an information measure and, in consequence, as a CM.

It has been shown in \cite{paper} that the generalised entropies $H_\pm(X)$  absolutely fulfil the criteria of stability as the Shannon's entropy does \cite{lesche}. Besides, these are entropies that do not depend on free parameters and resemble the Shannon's entropy in the regime of high-density states. Both formulations state individually a coding theorem (see Theorem \ref{t1} and \cite{noiseless}) and also correspond to the effective CMs $K_\pm(X)$, as assured by Theorem \ref{t2}.  Even more, what Theorem \ref{t2} confirms is that the CMs $K_\pm(X)$ inherit the attributes adjudicated to the entropies $H_\pm(X)$, therefore they qualify genuinely as CMs of information.

To reinforce our arguments, we have also showed that pursuing a generalisation of the notion of relative entropy \eqref{kullback}, permits the reconstruction of the complexities $K_\pm(X)$, as represented in Eqs. \eqref{alg1} and \eqref{alg2}. These results are equivalent to our previous computations via Theorem \ref{t2}.

As a final remark, the complexities $K_\pm(X)$ converge asymptotically to the usual Kolmogorov measure, while they reflect differences with the standard theory in case that the amount of information is handled in small chunks of data. Certainly $K_-$ could imply that, when dealing with short strings, the average rate of complexity is susceptible to further compression (or a briefer description) than the one specified by $K$. We point out that there is a possible route to generalise the algorithmic information theory on the basis of superstatistics, where $H_\pm(X)$ are the unique entropies resembling the Shannon's theory while accounting for non-equilibrium phenomena without using parameters.

\appendix
\section{Generalised logarithms and stretched exponentials}
\label{app}

Let $\epsilon:\mathbb{R}\to\mathbb{R}$ be a stretched exponential. The functions $\Lambda:\mathbb{R}\to\mathbb{R}$ fulfilling the conditions $\Lambda(1)=0$ and $\Lambda'(1)=1$, such that $\Lambda(\epsilon^x) = \epsilon^{\Lambda(x)}=x$ are called generalised (or effective) logarithms. Their series representation can usually be put into terms of the fundamental logarithm functions $\ln$ or $\log$. For a more detailed discussion than the one presented here, see Ref. \cite{hanellog}. We limit to present the basic structure of the effective logarithms $\ln_+$ and $\ln_-$, and their inverses.

We have the functions
\begin{equation}
\label{logs}
\begin{split}
\ln_{+}(x) & \equiv-\frac{1-x^x}{x}\\
\ln_{-}(x) & \equiv -\frac{x^{-x}-1}{x},
\end{split}
\end{equation}
for $x\in[0,1]$, otherwise the functions become undefined. From such definitions it becomes evident that the functions $\ln^{(\pm)}$ do not fulfil the three laws of logarithms. Yet they can be expanded in series as
\begin{equation}
\ln_+(x) = \ln x + \frac{1}{2!} x \ln^2x + \frac{1}{3!} x^2\ln^3x + \frac{1}{4!} x^3\ln^4x + \cdots,
\end{equation}
and
\begin{equation}
\ln_-(x) = \ln x - \frac{1}{2!} x \ln^2x + \frac{1}{3!} x^2\ln^3x - \frac{1}{4!} x^3\ln^4x + \cdots,
\end{equation}
note that the first term is in both cases leads the series, while higher order terms become subdominant as $x\to0$. This peculiar flexibility grants to entropies \eqref{masmenos} the simultaneous character of accounting for non-equilibrium phenomena in the low-probability regime, while preserving a well defined thermodynamical limit.

The corresponding stretched exponentials of \eqref{logs} do not posses a closed form, in this case we make use of a numerical representation. These functions have been constructed as
\begin{equation}
\label{exps}
\exp_{\pm}(x)\equiv\exp(-x)\sum_{j=0}^\infty a_{\pm}(j) x^{j}, \quad a_\pm(j) \in \mathbb{R},
\end{equation}
the first nine coefficients $a_\pm(j)$ are given in Table \ref{t:apm}.

\vspace{0.25cm}
\renewcommand{\arraystretch}{1.4}
\begin{table}[!htp]
\centering
\begin{tabular}{r>{\raggedleft}p{0.25\linewidth}>{\raggedleft\arraybackslash}p{0.25\linewidth}}
\hline
& $a_{+}(j)$ & $a_{-}(j)$ \\ \hline
$j=8$ & -0.000157095 & 0.000105402 \\
$j=7$ & 0.00373467   & -0.00211934 \\
$j=6$ & -0.0362676   & 0.0166679 \\
$j=5$ & 0.186358     & -0.0675544 \\
$j=4$ & -0.546751    & 0.16867 \\
$j=3$ & 0.905157     & -0.317048 \\
$j=2$ & -0.709322    & 0.3725 \\
$j=1$ & 0.0228963    & 0.0147449 \\
$j=0$ & 1 & 1 \\ \hline
\end{tabular}
\caption{$a^{(\pm)}$ coefficients.}
\label{t:apm}
\end{table}

\bibliographystyle{unsrt}
\bibliography{mybib}
\end{document}